\title{A Term-Rewriting Semantics for Imperative Style Programming}
\author{David A. Plaisted}{Department of Computer Science, UNC Chapel Hill, Chapel Hill, NC 27599-3175, U.S.A.,Phone: (919) 590-6051}{plaisted@cs.unc.edu}{}{}
\author{Lee Barnett}{Formal Models and Verification,Johannes-Kepler Universit\"atAltenbergerstraße 69, 4040 Linz, Austria}{lee.barnett@jku.at}{}{}
\authorrunning{D.\,A. Plaisted and L. Barnett}
\keywords{Term rewriting systems, imperative programming, abstract algorithms}
\begin{document}
\maketitle
\begin{abstract}
Term rewriting systems have a simple syntax and semantics and facilitate proofs of correctness.  However, they are not as popular in industry or academia as imperative languages.  We define a term rewriting based abstract programming language with an imperative style and a precise semantics allowing programs to be translatable into efficient imperative languages, to obtain proofs of correctness together with efficient execution.  This language is designed to facilitate translations into correct programs in imperative languages with assignment statements, iteration, recursion, arrays, pointers, and side effects.  It can also be used
in place of a pseudo-programming language to specify algorithms.
\end{abstract}

\section{Introduction}
Term rewriting systems\cite{bani:98} have a simple syntax and semantics.  A first-order term-rewriting based programming style is introduced that facilitates translations into imperative languages.
The purpose of this system is to provide a way to express abstract algorithms that has a simple syntax and semantics but is also close to imperative languages in style.  Then
abstract algorithms can be written and verified in such a system and translated into many imperative programming languages, so that the translated programs are correct and efficient.  This facilitates verification and also avoids the need to rewrite the same algorithm over and over again in many languages.  At this stage only pure algorithms are considered, and features such as interrupts and input-output are not considered.  After the abstract algorithm has been translated into a target language and inserted into a larger program, such features can be added in a language-specific way.  The idea of an abstract language that can be translated into others was presented earlier \cite{DBLP:journals/corr/cs-SE-0306028,Plaisted:2013}, but the abstract languages considered before were imperative and not fully specified or not specified at all.   It is {\em not} the purpose of this system to provide
the most efficient translation of term-rewriting systems into imperative languages, but
to provide an abstract term-rewriting based notation for algorithms with a precise, accessible syntax and semantics.

The abstract programs are written in a formalism that is pure and close to logical notation, so proofs of correctness may be easier than for imperative programs.  This is due to the simple syntax and semantics of first-order term-rewriting systems.

This system is untyped.  It is much simpler than other systems that have been proposed to give abstract descriptions of algorithms, such as Coq\cite{CoquandHuet:1988,Forster2018CallbyValueLC} or other logical frameworks\cite{gome:93,paulson:94}, which may use the Curry-Howard isomorphism\cite{Howard80}.  Thus this system may be easier for programmers to understand, though the ideas presented here can also be extended to more sophisticated formalisms.  It is {\em not} the purpose of this language to have the most expressive
features such as higher-order functions, nonlinear rules, overlapping rules or AC unification; the purpose is to keep the language as simple as possible while meeting its objectives.

The language is a combination of rewriting and single assignment style programming.  Side effects complicate the term rewriting semantics, but the semantics is still precisely defined.

\section{Syntax}

The usual definitions of terms, rewrite rules, and so on will be assumed \cite{bani:98}.  We concentrate on constructor systems that are
complete in that all terms of the form $f(t_1, \dots, t_n)$ for defined symbol $f$ and constructor terms $t_i$ are reducible.  Also,
we will use left-linear and non-overlapping (orthogonal) systems because they seem to correspond naturally to imperative programs.
Left linearity is a reasonable restriction for a programming language in the style of an imperative language,  because it corresponds to the lack of repeated formal parameters in imperative programming languages.  Disjointness is also reasonable because it means that each term can be rewritten in at most one way.

A {\em procedure} for non-constructor $f$  is a set of rewrite rules with all left-hand sides of the form $f(t_1, \dots, t_n)$ for some terms $t_i$.  If such a procedure consists of all rules in $R$ of the specified form then it is called the $R$-procedure for $f$.  A system $R$ is {\em complete} if for all non-constructors $f$ appearing in $R$, for all terms $f(t_1, \dots, t_n)$ where the $t_i$ are ground constructor terms, $f(t_1, \dots, t_n)$ is reducible by the $R$-procedure for $f$.

Completeness corresponds to the fact that all inputs can be processed by typical imperative languages.  It implies that all non-constructor ground terms are reducible; as a consequence, all normal forms of ground terms are constructor terms.
\begin{definition}
A {\em COC} system is a constructor term rewriting system that is complete and orthogonal.
\end{definition}
Such systems seem to correspond naturally to imperative programs.  In such a system $R$, if $R$ is terminating, then for each ground term $r$ there is a unique term $s$ such that $r \Rightarrow!_R s$, and $s$ must be a constructor term.  Also, term rewriting systems and constructor terms automatically give records and references.

A possible extension to the current formalism would be to allow nondeterminism.  Also, if $R$ is not terminating, then one can naturally extend the semantics to allow infinite constructor terms as normal forms, but this will not be considered here.

\subsection{Procedures}

We assume rewrite rules for $\wedge$, $\vee$, $\neg$, and other Boolean operators are given.  Binary Boolean operators will be used in infix notation.  Also, truth values $true$ and $false$ are constructor constants.  We also assume standard rules for (if  then else).

Here are some additional procedures that will be convenient

Top:

\begin{tabular}{l}
top($f(x_1, \dots, x_n)) \rightarrow c_f$\\
top($c$) $\rightarrow$ $c$ for constructor constants $c$\\
\end{tabular}

where $f$ is a constructor, $c_f$ is an individual constructor constant and $c_f$ is distinct for each $f$.

Equal:

\begin{tabular}{l}
$eq(c,c) \rightarrow$ true for constructor constants $c$\\
$eq(c,d) \rightarrow$ false for distinct constructor constants $c,d$\\
$eq(f(x_1, \dots, x_m), g(y_1, \dots, y_n)) \rightarrow$ $m = n \wedge$ $eq(top(f(x_1, \dots, x_m)), top(g(y_1, \dots, y_n)))$\\
$\wedge eq(x_1, y_1) \wedge \dots \wedge eq(x_m, y_m)$, one such rule needed for each pair $f$ and $g$ of constructors,  at\\
least one not a constant\\
\end{tabular}

Arg:

\begin{tabular}{l}
$arg(i,f(x_1, \dots, x_n)) \rightarrow x_i$\\
One rule needed for each different $i$, possibly infinitely many rules in all\\
\end{tabular}

Replace:

\begin{tabular}{l}
$replace(i,y,f(x_1, \dots, x_n)) \rightarrow f(x_1, \dots, y, \dots, x_n)$\\
where the $i$-th argument of $f$ has been replaced by $y$.\\
One rule is needed for each non-constant constructor $f$ and each $i$.\\
\end{tabular}

D-Replace:

There are also similar rules for $d\_replace(i,y,f(x_1, \dots, x_n))$ that return the same value but are destructive, in that the term
$f(x_1, \dots, x_n)$ is not copied but the $i^{th}$ argument is replaced.  This will be explained further in the following sections.

Tupling:

The {\em tupling} operator $\langle \dots \rangle$, a constructor, will also be needed, but there are no rules for it.  

Projection:

The Arg operator is specialized to $\pi_i$ for tuples:

$\pi_i \langle x_1, x_2, \dots, x_n \rangle = x_i$

Copy:

$copy(f(s_1, \dots, s_n)) \rightarrow f(s_1, \dots, s_n)$

This rule creates a new copy
of a term so that if the original term is destructively modified then the new term is not affected.

\subsection{Compiled Procedures}

Some procedures are {\em compiled}: If a procedure for $f$ is compiled, then when rewriting a term of the form $f(s_1, \dots, s_n)$ for constructor terms $s_i$, this term is replaced by a term $t$ generated by a program in some other language, possibly compiled. Examples include arithmetic operations in which numbers are considered as constants in the logic;  then for example $sum(3,5)$ would evaluate to $8$. Such a procedure can be represented by the set of rewrite rules $f(s_1, \dots, s_n) \rightarrow t$ of the above form, for enough tuples of constructor terms $s_i$ to guarantee completeness.  This could be an infinite system.  Procedures that are not compiled are represented by a finite term rewriting system.

If one represents the semantics of a compiled function $f$ as a function $f^*$ mapping tuples of
constructor terms to constructor terms, then the rewrite system for $f$ could be chosen to be $\{f(s_1, \dots, s_n) \rightarrow f^*(s_1, \dots, s_n): \mbox{the $s_i$ are ground constructor terms}\}$.

\subsection{Programs}
A {\em program} is a finite set of procedures.  The letters $P, Q$ will denote programs.  Procedures are indicated by the letters $p$ and $q$ and can either be
{\em rewrite procedures} or {\em flat} procedures.  Rewrite procedures $p$ are defined by 
a set $R_p$ of rewrite rules, with a possibly specified rewrite strategy for each procedure, and flat procedures $p$ are defined by
a set $R_p$ consisting of single rewrite rule of the form $p(x_1, \dots, x_n) \rightarrow
E_p[\overline{x}]$, where $E_p[\overline{x}]$ is as defined below.  There may also
be compiled procedures.  If $P$ is a program then
there is a term-rewriting system $R_P$ associated with $P$ which is $\bigcup \{R_p :
p \in P\}$.
If $R$ is a term rewriting system, then $R^=$ is the set of equations $r = s$ for all rules $r \rightarrow s$ in $R$.  If $P$ is a program then the {\em declarative semantics} ${\cal D}(P)$ of $P$ is $R_P^=$.  It is easily seen that if $s \Rightarrow_R^* t$ then ${\cal D}(P) \models (s = t)$.  The following result is well known.

\begin{theorem}
If $P$ is a $COC$ program, then there are no two distinct constructor terms $s$ and $t$ such that ${\cal D}(P) \models s = t$.
\end{theorem}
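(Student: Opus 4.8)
The plan is to reduce the statement to two classical facts about orthogonal term rewriting systems — Birkhoff's completeness theorem and the Church--Rosser property — together with the elementary observation that a constructor term has no redex. First I would rewrite the semantic condition equationally: since ${\cal D}(P) = R_P^=$, Birkhoff's theorem gives that ${\cal D}(P) \models s = t$ holds if and only if $s$ and $t$ are convertible, i.e. $s \leftrightarrow^*_{R_P} t$, where $\leftrightarrow_{R_P}$ is the symmetric closure of the one-step rewrite relation (closed under contexts and substitutions). If $s$ and $t$ contain variables one may argue directly in equational logic or, equivalently, replace the variables by fresh constructor constants; this changes nothing. So it suffices to show that no two distinct constructor terms are $R_P$-convertible.

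The second ingredient is confluence. By hypothesis $P$ is a COC program, so $R_P$ is left-linear and non-overlapping, hence orthogonal; and a standard result (see, e.g., \cite{bani:98}) is that every orthogonal system is confluent — including infinite ones, so the argument is unaffected by compiled procedures represented by infinitely many rules. Confluence is equivalent to the Church--Rosser property, so $s \leftrightarrow^*_{R_P} t$ implies that there is a term $u$ with $s \Rightarrow^*_{R_P} u$ and $t \Rightarrow^*_{R_P} u$. Note that the per-procedure rewrite strategies play no role here, since ${\cal D}(P)$ and the conversion relation are defined from $R_P$ without reference to any strategy.

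The third ingredient is that constructor terms are $R_P$-normal forms. Because $R_P$ is a constructor system, every left-hand side has the form $f(t_1,\dots,t_n)$ with $f$ a non-constructor (defined) symbol; a term built solely from constructors (and variables) therefore has no subterm matching any left-hand side, hence no redex. Assembling the pieces: if $s$ and $t$ were distinct constructor terms with ${\cal D}(P) \models s = t$, then by Birkhoff $s \leftrightarrow^*_{R_P} t$, by Church--Rosser they would have a common reduct $u$, and since both $s$ and $t$ are normal forms each reduces only to itself, so $s = u = t$, contradicting distinctness. Observe that completeness of $P$ is not needed for this direction — it is what guarantees that every ground term \emph{has} a constructor normal form, whereas here we only use that a constructor term is itself irreducible.

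I do not expect a genuine obstacle: the result is essentially folklore. The part requiring the most care is the invocation of confluence for orthogonal systems in a form that covers the possibly-infinite rule sets arising from compiled procedures, and stating Birkhoff's theorem in exactly the form that identifies entailment from $R_P^=$ with $R_P$-conversion; the constructor-irreducibility step is immediate from the definitions in Section 2.
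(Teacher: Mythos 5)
Your proposal is correct and follows essentially the same route as the paper's own proof, which simply invokes confluence of COC (orthogonal) systems together with the irreducibility of constructor terms via "standard term-rewriting theory"; you have merely made the standard ingredients (Birkhoff's theorem and the Church--Rosser property) explicit. Your added observations — that confluence of orthogonal systems covers infinite rule sets from compiled procedures, and that completeness of $P$ is not actually needed — are accurate refinements of the same argument.
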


\begin{proof}
This follows from confluence of $COC$ systems and from the fact that all constructor terms are irreducible, using standard term-rewriting theory.
\end{proof}

\section{Example Procedures}
We now give some example procedures in this formalism.  For these, if the input is not of the expected term, then the result will be a constructor constant ``error" indicating that an error has occurred.

Here is the essential part of a (flat) binary search procedure, omitting arithmetic functions.  Here $i$ and $j$ are the lower and upper bounds of the search, $x$ is the array being searched, and $y$ is the element being looked for.  The procedure returns a pair $\langle b,k\rangle$ where $b$ is $true$ if the element is found, $false$ otherwise, and $k$ gives the location of the element if it is found:
 
\begin{tabular}{l}
$bsearch(i,j,x,y) \rightarrow$\\
\ \ \ if $i > j$ then $\langle false, \_\_ \rangle$ else\\
\ \ \ \ \ \ if $eq(	i,j)$ then (if $eq(arg(i,x),y)$ then $\langle true, i\rangle$ else $\langle false, \_\_ \rangle$) else\\
\ \ \ \ \ \ if $y < arg(\lfloor (i+)j/2 \rfloor, x)$ then $bsearch(i,\lfloor (i+j)/2 \rfloor,x,y)$ else $bsearch(\lfloor (i+j)/2 \rfloor, j, x, y)$\\
\end{tabular}

For this procedure $p$, $E_p$ is the text on the right-hand side of the rule.
One could also do an insertion sort using $replace$ and $arg$.

The append function on lists can be written as follows, using the old LISP notation for lists in which $cons(x,y)$ is the list $y$ with $x$ added to the front, and $NIL$ is the empty list:

\begin{tabular}{l}
$append(cons(u,v),w) \rightarrow cons(u, append(v,w))$\\
$append(NIL, w) \rightarrow w$\\
\end{tabular}

For the append procedure, the only constructors needed are $cons$ and $NIL$.  If there are other constructors, then $append$ would also have to be given a definition on them, to achieve completeness.  If sorted term rewriting were used, then it would be sufficient to specify that the arguments of $append$ must be of the appropriate sort.

Following is a  procedure that computes the length of a list:

\begin{tabular}{l}
$length(cons(u,v)) \rightarrow 1 + length(v)$\\
$length(NIL) \rightarrow 0$\\
\end{tabular}

This (flat) procedure zeroes out a range of elements in an array:

\begin{tabular}{l}
$zeroint(i,j,x) \rightarrow$ if $i > j$ then $x$ else $zeroint(i+1,j,replace(i,0,x))$\\
\end{tabular}

Because of the tail recursion, this procedure could be translated into an iterative program in an imperative programming language.

It should be clear how one could write programs using this formalism.  Note that the correctness of such programs follows in a sense from the declarative semantics of the program, assuming that all of the equations are correct.

\section{Equivalence Relation on Terms}

In an implementation of rewriting, it is convenient to store all occurrences of the same subterm only once and have pointers to this subterm.  This technique is already well known for functional programming and term rewriting.  This idea also impacts translations of the rewrite programs to imperative programming languages.  This identification of repeated subterms can be formalized by an {\em equivalence relation} on subterm occurrences in a program snapshot.  The intention is that equivalent subterm occurrences would all be stored in the same location.  This treatment of repeated subterms will also influence destructive operations, which are necessary for efficient implementations.

For this, the equivalence classes are named using {\em identifiers}.  A {\em decorated function symbol} is a pair $id:f$ where $f$ is an ordinary {\em plain} function symbol and $id$ is the name of an equivalence class.  A {\em decorated term} is a term in which all the function symbols are decorated function symbols.  If $id:f(t_1, \dots, t_n)$ is a decorated term, then it is in the $id$ equivalence class.  The relation $s \sim t$ on decorated terms
is defined so that $id:f(s_1, \dots, s_m) \sim id':g(t_1, \dots, t_n)$ only if $f = g$, $m = n$,
$id = id'$, and $s_i \sim t_i$ for all $i$, $1 \le i \le n$.  If $id:f(s_1, \dots, s_m)$ and $id':g(t_1, \dots, t_n)$ are two decorated terms and $id = id'$ then it must be true that
$id:f(s_1, \dots, s_m) \sim id':g(t_1, \dots, t_n)$. The identifier $id$ is called an
{\em equivalence label} because there may be identical terms that have different labels on their top-level symbols.  If $s$ is a decorated term then $s \rhd t$ indicates that the {\em plain} term $t$ is $s$ with all the equivalence class identifiers removed.

\subsection{Decorated Rewriting}

The rewrite relation can be extended to a {\em decorated rewrite relation} $\Rightarrow^d$ on decorated terms.  For this, a {\em decorated substitution} is of the form
$\{t_1/x_1, \dots, t_n/x_n\}$ where the $t_i$ are decorated terms.  Suppose $t[u]$ is a decorated ground term with $u$ as a subterm.  Suppose $u \rhd u'$ and $u'$ unifies with the left hand side $r$ of a rule $r \rightarrow s$ in $R$.  Thus there is a substitution $\Theta$ such that $u' \equiv r\Theta$.  Let $r^d$ and $\Theta^d$ be a decorated term and a decorated substitution such that $u \equiv r^d \Theta^d$. Let $s^d$ be a decorated $s$ where the identifiers in $s$ are chosen so that only syntactically identical subterms of $t[s^d \Theta^d]$ are in the same equivalence class.  Then $t[u] \Rightarrow^d_R t[s^d\Theta^d]$.  This implies that if the right-hand side of a rewrite rule has repeated variable occurrences, then all occurrences of a term replacing a given repeated variable will be equivalent.

The above description applies if there is only one term in the equivalence class of $u$.  If there is more than one such term, then they all have to be rewritten together.   Indicating all the occurrences of $u$ in $t$ by $t[u, u, \dots, u]$, $t[u, u, \dots, u] \Rightarrow^d_R t[s^d\Theta^d, s^d\Theta^d, \dots, s^d\Theta^d]$.  This kind of rewriting, when all identical redexes are rewritten at the same time, is called {\em parallel rewriting}.

Equivalence class names also have to be assigned to subterms of the input term (the term
given to evaluate at the start).  That is, the input term has to be decorated.  This can be done arbitrarily subject to the rule that only syntactically identical subterms can be in the same equivalence class.

\section{Assignment Statements}

In the rewriting formalism, it is convenient to allow a single assignment style of programming on the right-hand sides of rewrite rules.   Such a construction is essentially identical to the (let $x = t$ in $E$) construction in functional programming. This is similar to the imperative style of programming. For example, consider this procedure given earlier:

\begin{tabular}{l}
$zeroint(i,j,x) \rightarrow$ if $i > j$ then $x$ else $zeroint(i+1,j,replace(i,0,x))$\\
\end{tabular}

This can also be written this way:

\begin{tabular}{l}
$zeroint(i,j,x) \rightarrow$ if $i > j$ then $x$\\
else\\
\ \ \  $k \leftarrow i+1$;\\
\ \ \  $y \leftarrow  replace(i,0,x)$;\\
\ \ \  $zeroint(k,j,y)$\\
\end{tabular}

In this procedure the assignment statements do not influence the equivalence relation, but if the variable on the left-hand side of an assignment statement occurs more than once in the remainder of the procedure, then all these occurrences will be replaced by equivalent terms. An example of this is the term $x \leftarrow g(c); f(x,x)$ which is equivalent to $f(g(c),g(c))$ with the two occurrences of $g(c)$ equivalent.  So assignment statements can influence the equivalence relation and thus the decorated rewriting relation.  Other than this, the procedure with assignment statements can be considered as equivalent to the version without them.

Also, $\langle x_1, \dots, x_n\rangle$   $\leftarrow t$ is an allowed assignment statement. The intention is that $t$ evaluates (rewrites) to a term of form $\langle t_1, \dots, t_m \rangle$ for $m > n$,  and then this assignment can be regarded as the sequence $x \leftarrow t; x_1  \leftarrow arg(1,x); \dots; x_n  \leftarrow arg(n,x)$ where $x$ is a new variable.  Also $arg(i,x) = \pi_i(x)$.

Another allowable form for a term is illustrated by the term $f((x \leftarrow g(c); h(x)), (y \leftarrow h(d); g(y)))$ which is equivalent to $f(h(g(c)),g(h(d)))$.

No assignment statements are allowed on the right hand side of an assignment statement.  This rule could be relaxed, but it simplifies the formalism.

In general, let an ``aterm" be a term that may contain assignment statements.  Basically an ``aterm" is a sequence of assignment statements followed by a term.  The term, with variables replaced as specified by the assignment statements, is the value of the ``aterm." Then we have the syntax of rewrite rules with assignment statements as follows:

$\langle$ assignment $\rangle$  := $\langle$ variable $\rangle$   $\leftarrow$ $\langle$ term $\rangle$ 

$\langle$ assignment $\rangle$  := $\langle$ left tuple delimiter $\rangle$  $\langle$ variable list $\rangle$  $\langle$ right tuple delimiter $\rangle$    $\leftarrow$ $\langle$ term $\rangle$ 

$\langle$ rewrite rule $\rangle$  := $\langle$ term $\rangle$  $\rightarrow$ $\langle$ aterm $\rangle$ 

$\langle$ aterm $\rangle$  := $\langle$ term $\rangle$ 

$\langle$ aterm $\rangle$  := $\langle$ assignment $\rangle$  ; $\langle$ aterm $\rangle$

$\langle$ aterm $\rangle$  := (if $\langle$ term $\rangle$ then $\langle$ aterm $\rangle$ else $\langle$ aterm $\rangle$)

Also, the sequence $x_1 \leftarrow t_1 ; \dots ; x_n  \leftarrow t_n ; E$ is regarded as associating to the right

The statements ($x \leftarrow t; E[x]$) are considered as a representation for $E[t]$ with the understanding that all occurrences of $x$ are replaced by $t$, but this has to be defined
more carefully later.

A statement of the form (if $B$ then $x$ else $y$); $E$ is considered as an abbreviation for (if $B$ then $x;E$ else $y;E$) assuming that
$x ; E$ and $y ; E$ are $\langle$ aterm $\rangle$.  Thus for example $x$ and $y$ can be sequences of assignment statements.

\subsection{Variable bindings}

There are restrictions on where variables may appear in terms having assignment statements.  These restrictions are defined in terms of binding or scoping of variables.

In an expression of the form $x \leftarrow t; E[x]$, all occurrences of $x$ in $E$ are {\em bound} by the assignment statement $x \leftarrow t$.  However, occurrences of $x$ in $t$ are not bound by this assignment statement.

A variable occurrence that is bound by some assignment statement is said to be {\em bound}.  A variable that is not bound is {\em free}.

The binding rule for variable occurrences is this: In an assignment statement $x \leftarrow s$, if any variable $y$ appears in $s$ then all its occurrences in $s$ must be bound, except that any  occurrences of a variable $x$ on the left hand side of an assignment statement must be free.

These restrictions imply that in a sequence $x_1 \leftarrow t_1 ; \dots ; x_n \leftarrow t_n ; E$, $x_i$ cannot appear in $x_j  \leftarrow t_j$ for $j  <  i$.  All the possibilities for $x_i$ being bound or free in $x_j \leftarrow t_j$ end up violating some restriction on binding.  These restrictions also imply that in an assignment statement $x \leftarrow t$, $x$ cannot appear in $t$.  This prohibits assignment statements such as $i \leftarrow i+1$.

\subsection{Order of eliminating assignment statements}

Because the statements ($x \leftarrow t; E[x]$) are considered as a representation for $E[t]$, where in $E[x]$, $x$ refers to free occurrences of $x$ in $E$, assignment statements can be eliminated by replacing ($x \leftarrow t; E[x]$) by $E[t]$.  If there is  more than one assignment statement, then the question arises whether the final result depends on the order in which the assignment statements are eliminated.  It turns out that the result of eliminating assignment statements in a rewrite rule does not depend on the order in which they are eliminated, and even the equivalence class names are not affected.  Thus a rewrite rule containing assignment statements unambiguously represents one without them.   In this way one can speak of {\em the} assignment free form of a program.  For this, the terms can be considered as decorated terms to make it clear that the final result, including the names of the equivalence classes, does not depend on the order of elimination.  Also, the process of elimination must terminate because the number of assignment statements in $E[t]$ is one less than the number in ($x \leftarrow t; E[x]$) because $t$ does not contain any assignment statements.  The following result was shown in \cite{BarnettPlaisted2018}:

\begin{theorem}
The result of eliminating assignment statements from a rewrite procedure does not depend on the order in which the statements are eliminated.
\end{theorem}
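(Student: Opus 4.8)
The plan is to recast the statement as a confluence (Church--Rosser) property of a one-step \emph{elimination relation} $\to_e$ on aterms. Write $A \to_e A'$ when $A'$ is obtained from $A$ by choosing one assignment statement $x \leftarrow t$ together with its scope $E$ (so that $(x \leftarrow t; E)$ occurs as a subterm of $A$) and replacing that occurrence by $E[t/x]$, i.e.\ $E$ with every free occurrence of $x$ replaced by $t$. Since the rules of a procedure are independent, it suffices to treat a single aterm. Elimination is strongly normalizing --- the text already observes that each step strictly decreases the number of assignment statements, because $t$ contains none --- so by Newman's Lemma it is enough to prove \emph{local} confluence: whenever $A \to_e A_1$ and $A \to_e A_2$, the aterms $A_1$ and $A_2$ have a common reduct. (In fact, because $t$ never contains assignment statements, contracting one redex neither creates nor destroys nor duplicates any other redex --- it only relocates it and substitutes inside it --- so one even gets the strong diamond property directly; but local confluence plus termination is all that is needed.)

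First I would split into cases on the relative positions of the two contracted redexes $\rho_1 = (x \leftarrow s; E)$ and $\rho_2 = (y \leftarrow u; F)$ in $A$. If their positions are disjoint, the two contractions act on non-overlapping subterms, each leaves a residual of the other at the same position, and the two orders visibly yield the identical aterm. If one position is a prefix of the other --- say $\rho_2$ lies inside the scope $E$ of $\rho_1$ (the symmetric case is identical, and $\rho_2$ cannot lie inside $s$ since $s$ has no assignments) --- then contracting $\rho_1$ turns $\rho_2$ into the residual obtained by applying $[s/x]$ to it, whereas contracting $\rho_2$ first leaves $\rho_1$ intact; performing the leftover step in each branch and stripping the common surrounding context, closing the diagram reduces to the single identity $F[s/x]\bigl[\,u[s/x]/y\,\bigr] = F[u/y][s/x]$.

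The heart of the argument is therefore the \emph{substitution lemma} familiar from the $\lambda$-calculus, $M[x{:=}N][y{:=}L] = M[y{:=}L]\bigl[x{:=}N[y{:=}L]\bigr]$, valid provided $x \neq y$ and $x \notin FV(L)$. Here the hypotheses are exactly what the binding restrictions stated earlier supply: distinctness of the assignment variables (which we may assume after $\alpha$-renaming, adopting a Barendregt-style convention that all bound variables are pairwise distinct and distinct from the free ones --- legitimate because the scoping rules forbid any capturing overlap), together with the ``no forward reference'' consequence that in $x_1 \leftarrow t_1; \dots; x_n \leftarrow t_n; E$ the variable $x_i$ does not occur in $t_j$ for $j<i$, which is precisely the condition $y \notin FV(s)$ needed above. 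I would also record the accompanying freshness facts --- $x \notin FV(s)$, so $s[s/x]=s$, and no variable of $s$ is captured by an assignment inside $E$ --- that make these substitutions behave like ordinary capture-avoiding substitution, after which the two displayed sides agree term by term.

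Finally, to get the stronger claim that even the equivalence-class labels are order-independent, I would re-run the whole development on \emph{decorated} aterms, with a decorated substitution that replaces each free $x$ by a copy of the decorated term carrying the same equivalence labels on all copies (so the copies are, as required, syntactically identical and hence co-classified), introducing fresh labels only where a genuinely new subterm is created. One then checks that the decorated substitution still satisfies the substitution lemma verbatim, and that the local-confluence diagrams above close on the nose at the level of labels, so the decorated assignment-free normal form is unique as well. I expect this last step to be the main obstacle: verifying that label choices made along one elimination order can always be matched by those of another, including the occasional merging of classes when a substitution produces a subterm syntactically equal to a pre-existing one. The plain-term confluence, by contrast, is essentially just the substitution lemma packaged inside Newman's Lemma.
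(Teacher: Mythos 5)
The paper does not actually contain a proof of this theorem: it states the result and defers entirely to an external reference, so there is no in-paper argument to match yours against. Judged on its own terms, your proposal is correct and is the standard way to establish the claim. You correctly reuse the one ingredient the paper does supply --- termination of elimination because each step removes exactly one assignment (the right-hand side $t$ contains none) --- and reduce the rest to local confluence via Newman's Lemma. The case analysis is complete: disjoint redexes commute trivially; a redex cannot sit inside the right-hand side $s$ of another because assignments are forbidden there (which also means contraction never duplicates a redex, so your parenthetical diamond-property remark is right); and the nested case is exactly the substitution lemma $F[u/y][s/x] = F[s/x]\bigl[\,u[s/x]/y\,\bigr]$, whose side conditions $y \neq x$ and $y \notin FV(s)$ are supplied by the paper's own derived restriction that in $x_1 \leftarrow t_1; \dots; x_n \leftarrow t_n; E$ the variable $x_i$ cannot occur in $x_j \leftarrow t_j$ for $j < i$, together with the $\alpha$-renaming you invoke (which matches the paper's own device of rewriting $x \leftarrow t_1; x \leftarrow t_2[x]; E[x]$ as $x_1 \leftarrow t_1; x_2 \leftarrow t_2[x_1]; E[x_2]$). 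The one place where you are candid about leaving work undone is the right one: the theorem as used in the paper also asserts that the \emph{equivalence-class labels} are order-independent, and that part genuinely requires fixing a deterministic label-assignment discipline for decorated substitution (identical copies co-classified, fresh labels only for genuinely new subterms) and re-closing the confluence diagrams at the level of labels; your sketch of this is plausible but is the only component that falls short of a complete argument.
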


\subsection{Multiple Assignments into a Variable}

In order to make the language closer to imperative languages, it is convenient to allow multiple assignments into the same variable. This requires a modification of the scoping rules
above, so that in an assignment statement $x \leftarrow t$, $x$ can be bound.  In such a case, it is convenient to specify the order in which the assignment statements are evaluated. 
This even becomes necessary if there are destructive operations, as we shall see.  Therefore it is necessary to specify an evaluation strategy. Consider an expression of the form $x \leftarrow t_1 ; x \leftarrow t_2 [x]; E[x]$.    This can be regarded as equivalent to
$x_1 \leftarrow t_1; x_2 \leftarrow t_2[x_1]; E[x_2]$.   However, there is another method to
specify the meaning of such assignments that
avoids the need to introduce new variables, and is more elegant.

\subsection{Eliminating assignment statements}

Letting $[[E]]$ be the result of eliminating
assignment statements from an expression $E$, the meaning of multiple assignment statements into
a variable can be expressed more elegantly by the rules

\[[[x \leftarrow t ; E]] = [[E]] (t/x)\]
\[[[\langle x_1, \dots, x_n \rangle \leftarrow t ; E]] = [[x \leftarrow t ; x_1 \leftarrow \pi_1(x) ; \dots, x_n \leftarrow \pi_n(x) ; E]]\]

The first rule works because $[[E]]$ will be a term without any assignment statements.  Essentially the assignment statements are eliminated innermost first.  Also, if $f$ is a defined or constructor term (other than semicolon) then 

\[[[f(t_1, \dots, t_n)]] = f([[t_1]], \dots [[t_n]]).\]

Finally, a procedure
definition $p(x_1, \dots, x_n) \rightarrow E$ is converted into the rewrite rule

\[p(x_1, \dots x_n) \rightarrow [[E]]\]

and $R_p$ is the singleton set containing this rule.

One can show that $[[E]]$ will contain no assignment statements for an expression $E$ of the form
$\langle aterm \rangle$ or $\langle term \rangle$.  Variables on the left-hand sides of assignment statements will be eliminated from the term
rewriting system associated with a procedure when the assignment statements are eliminated, so
these variables do not need to be considered as part of the
term rewriting system.

\section{Destructive Operations }

Destructive operations are needed for efficiency in imperative programs, but do not conform to pure term rewriting semantics.  Therefore it is important to have a way to relate the two semantics.

We illustrate the problem with an assignment to an element of an array. This example uses {\em decorated rewriting}, so all equivalent subterms are rewritten at once.

Suppose the operation $replace(i,t,A(s_1, \dots , s_n))$ is done where $A(s_1, \dots, s_n)$ represents a one dimensional array with $n$ elements.  This operation replaces the $i$-th argument of $A$ with $t$.  In rewriting semantics, the result is $A(s_1, \dots, t, \dots, s_n)$, where the term $t$ replaces $s_i$.  This operation creates (copies) a completely new term, and other occurrences of $A(s_1, \dots, s_n)$ are not affected.  In $d$-$replace(i,t,A(s_1, \dots , s_n))$, the same storage is used for $A(s_1, \dots, t, \dots,  s_n)$ as was used for $A(s_1, \dots, s_n)$,and just the $i$-th element is modified, so all other occurrences of $A(s_1, \dots, s_n)$ in the same equivalence class are also modified to $A(s_1, \dots, t, \dots,  s_n)$.  Thus the destructive operation may have a {\em side effect}.
Such destructive operations are necessary for efficient imperative programs, but do not conform to the pure term rewriting semantics if there are any other occurrences of $A(s_1, \dots, s_n)$ in the same equivalence class.  A similar situation can occur for a change to a pointer in the middle of a list; other references to this list will also have the pointer changed if destructive operations are used.

To make this more precise, the notation $t[u_1, \dots, u_n] \Rightarrow t[v_1, \dots, v_n]$ can be used to illustrate that all the changes of the terms $u_i$ to $v_i$ occur together. We also refer a term $u$ in equivalence class $id$ by $id:u$ as before.  So for the array modification example we have
\begin{gather*}
t[id':d\_replace(i,t,id:A( s_1 , \dots,  s_n)), id:A( s_1 , \dots,  s_n)] \Rightarrow\\
t[id:A( s_1  , \dots,  t , \dots,  s_n), id:A( s_1 , \dots,  t , \dots,  s_n)]\\
t[id':replace(i,t, id:A( s_1 , \dots,  s_n)), id:A( s_1 , \dots,  s_n)] \Rightarrow\\
t[id':A( s_ 1 , \dots,  t , \dots,  s_n), id:A( s_1 , \dots,  s_n)]
\end{gather*}
In the expression $t[ \dots ]$, the terms inside the brackets indicate distinct subterm occurrences in a term $t$.
In general, for an operation $f(s_1, \dots, s_n)$ that returns the $i$-th argument $s_i$ but destructively modifies it to $t$, we have $r[f( s_1 , \dots,  s_n), s_i, u[s_i]]$ $\Rightarrow r[t, t, u[t]]$ but for non-destructive operations $r[f ( s_1 , \dots,  s_n), s_i, u[s_i]] \Rightarrow r[ t,s_i, u[s_i]]$.

This problem can be alleviated for destructive operations by an explicit copy operation before the destructive operation; the copy operation creates a new syntactically identical term but not in the same equivalence class.  Then the destructive operation is done on the copy.  However, this has a cost in efficiency.

Now, programs with destructive operations may not be confluent.  Here is an example:

\begin{example}
\label{non.confluence.example}

Starting term $f(\langle1,2\rangle)$

$f(x) \rightarrow \langle arg(1,x),d\_replace(1,2,x) \rangle$

$arg(i,x)$ gives the $i$-th argument of $x$

$d\_replace(1,2,x)$ replaces the first argument with 2 in $x$ destructively
\end{example}

If the first element of the tuple is evaluated first we get $\langle1,\langle 2,2\rangle \rangle$, which agrees with the semantics for non-destructive replace, else we get $\langle 2,\langle 2,2 \rangle \rangle$, which does not, noting that the two occurrences of $x$ on the right hand side of the rule are equivalent (in the same equivalence class).  The point of this is not to show that orthogonal
rewriting with common subterms is not confluent, but to illustrate the problems that come with giving a rewriting semantics and with destructive operations.

Because programs with destructive operations are not confluent, an evaluation strategy has to be specified to give a program with destructive operations a precise meaning.  The body of a procedure
definition can be replaced by a term, but the order of evaluation of the subterms has to be retained in
order to specify the effects of destructive operations.

It is possible for destructive operations to produce a circular structure, in which a term
has itself as a proper subterm.  For simplicity we assume that this does not happen, although it does not cause significant problems if it does.

\section{Iterative Statements}

Imperative languages generally have iterative statements such as $for$, $while$, and $until$.  It would be good to have these also in this language.  Here these statements are defined by recursive
calls, but the intention is that these statements would translate into corresponding iterative statements in the target imperative languages, and not recursive calls.  To define iterative statements, the procedures $p_{for}$, $p_{while}$, and $p_{until}$ are used, with definitions as follows:
\begin{gather*}
p_{for, A, E, i}(i_0,n,\overline{x}) \rightarrow  i \leftarrow i_0; \overline{y} \leftarrow \overline{x};\mbox{ if } i \le  n \mbox{ then } (A ; p_{for,A,E,i}(i_0+1,n,\overline{y})) \mbox { else } E\\
p_{while,B,A,E}(\overline{x}) \rightarrow \overline{y} \leftarrow \overline{x}; \mbox{ if } \neg B \mbox{ then } E \mbox{ else } (A ; p_{while,B,A,E}(\overline{y}))\\
p_{until,B,A,E}(\overline{x}) \rightarrow \overline{y} \leftarrow \overline{x}; A; \mbox{ if } B \mbox{ then } E \mbox{ else } (A ; p_{until,B,A,E}(\overline{y})\\
\end{gather*}
The variables $\overline{x}$ are the variables whose values $A$ or $E$ needs, whose value is computed outside of these segments.  The assignment $\overline{y} \leftarrow \overline{x}$ copies these values into the appropriate variables.  Sometimes these variables will be omitted.
$p_{for,A[i],E[i],i}(i_0,n,\overline{x})$ is similar to the construct $\overline{y} \leftarrow \overline{x}$; (for $i = i_0$ step 1 until $n$ do $A[i]) ; E[i]$.  $p_{while,B,A,E}(\overline{x})$ is similar to the construct $\overline{y} \leftarrow \overline{x}$; (while $B$ do $A$); $E$.  $p_{until,B,A,E}(\overline{x})$ is similar to the construct
$\overline{y} \leftarrow \overline{x}$; (do $A$ until $B$); $E$.  In a system implementing these ideas, the user would write programs using the usual for, while, and until syntax, and these would be translated into $p_{for}$, $p_{while}$, and $p_{until}$ to obtain a term-rewriting system for purposes of proving properties or translation into imperative programming languages.  For nested
iterative statements, the outermost iterative statement can be replaced by a recursive procedure call, and then the same thing can be done for the next innermost iterative statement, and so on.

To illustrate these constructs, consider this example program:

\begin{tabular}{l}
$p(i_0,n,x_1) \leftarrow$\\
\ \ $y_1 \leftarrow x_1$;\\
\ \ for $i = i_0$ step 1 until $n$ do\\
\ \ \ \ $y_1 \leftarrow 2 * y_1$;\\
\ \ \ \ $y_1 \leftarrow y_1 + 1$;\\
\ \ $y_1$;\\
\end{tabular}

This program is equivalent to $p_{for,A,E,i}(i_0,n,x_1)$ where $A = (y_1 \leftarrow 2 * y_1 ; y_1 \leftarrow y_1+1)$ and $E = y_1$.  In the recursive notation, this can be written as follows.

\begin{tabular}{l}
$p(i_0,n,x_1) \leftarrow$\\
\ \ $i \leftarrow i_0$;\\
\ \ $y_1 \leftarrow x_1$;\\
\ \ if $i \le n$ then $y_1 \leftarrow 2 * y_1$;\\
\ \ \ \ $y_1 \leftarrow y_1 + 1$;\\
\ \ \ \ $p(i_0+1,n,y_1)$\\
\ \ else $y_1$;\\
\end{tabular}

Eliminating assignment statements from the recursive program,
\[[[y_1 \leftarrow 2 * y_1; y_1 \leftarrow y_1 + 1; p(i_0+1,n,y_1)]]=p(i_0+1,n,2*y_1+1)\]
so the entire program becomes
\[\mbox{if } i_0 \le n \mbox{ then } p(i_0+1,n,2*x_1+1) \mbox{ else } 2*x_1 + 1\]
and the rewrite rule for this program is
\[p(i_0,n,x_1) \rightarrow \mbox{if } i_0 \le n \mbox{ then } p(i_0+1,n,2*x_1+1) \mbox{ else } 2*x_1 + 1\]

Rewriting is done leftmost innermost except for conditional statements (if $B$ then $x_1$ else $x_2$) in which the condition is evaluated and then one of the branches $x_1$ or $x_2$.  Consider the evaluation of $p(1,3,1)$. This rewrites to $\mbox{if } 1 \le 3 \mbox{ then } p(1+1,3,2*1+1) \mbox{ else } 2*1 + 1$ which by compiled functions and evaluating the conditional
yields $p(2,3,3)$. Rewriting this yields $\mbox{if } 2 \le 3 \mbox{ then } p(3,3,2*3+1) \mbox{ else } 2*3 + 1$ which yields $p(3,3,7)$. Rewriting this yields $\mbox{if } 3 \le 3 \mbox{ then } p(3+1,3,2*7+1) \mbox{ else } 2*7 + 1$ which yields $p(4,3,15)$. Rewriting this and evaluating as before yields 15.

\section{Defining algorithms}

Pseudocode is used in many algorithm texts to define algorithms.  However, this code generally
does not have a precisely specified syntax and semantics.  Here is what a well-known algorithms text \cite{Cormen:2009:IAT:1614191} says on page 16:

\begin{quotation}
In this book, we shall typically describe algorithms as programs written in a {\em pseudocode} that is similar in many respects to C, C++, Java, Python, or Pascal.
\end{quotation}

Another problem is that there is not a precise definition for what an algorithm is \cite{Vardi:2012:ALG:2093548.2093549,DBLP:conf/sofsem/2012,Hill:2015,DBLP:journals/bsl/BlassDG09,Moschovakis2001,journals/eatcs/BlassG03,Gurevich:2000:SAM:343369.343384,Blass03abstractstate}.  The first algorithm in the text by Cormen et al \cite{Cormen:2009:IAT:1614191} is insertion sort.  However, it is not precisely defined how one tells if an algorithm is insertion sort or not.

While not claiming to solve this problem, the rewriting language can be used as pseudocode to define algorithms.  It does have a precise syntax 
and semantics, so that in principle programs can be verified in it.  It also has the appearance of imperative
languages, making it suitable for defining algorithms in an imperative style.  Furthermore, translations of this language into high-level programming languages can be written in a way that preserves the
structure of the algorithm, in that assignment statements translate to assignment statements, iterative
statements to iterative statements, procedure calls to procedure calls, and conditional statements to
conditional statements.

Here is what the insertion sort looks like in the algorithms text referenced above \cite{Cormen:2009:IAT:1614191}:

Insertion Sort(A)

\begin{tabular}{ll}
1 & \ \ for $j = 2$ to $A.length$\\
2 & \ \ \ \ $key = A[j]$\\
3 & \ \ \ \ Comment\\
4 & \ \ \ \ $i = j-1$ \\
5 & \ \ \ \ while $i > 0$ and $A[i] > key$\\
6 & \ \ \ \ \ \ $A[i+1] = A[i]$\\
7 & \ \ \ \ \ \ $i = i-1$\\
8 & \ \ \ \ $A[i+1] = key$
\end{tabular}

In the rewriting language this could be defined as follows, letting $A[j]$ be a notation for
$arg(j,A)$ and $A[i] \leftarrow E$ be a notation for $A \leftarrow$ $d\_replace$($i, E, A$):

\begin{tabular}{ll}
0 & $InsertionSort(A, length) \leftarrow$\\
1 & \ \  for $j = 2$ step 1 until $length$ do\\
2 & \ \ \ \ $key \leftarrow A[j]$;\\
3 & \ \ \ \ Comment;\\
4 & \ \ \ \ $i \leftarrow j-1$; \\
5 & \ \ \ \ while $i > 0$ and $A[i] > key$ do\\
6 & \ \ \ \ \ \ $A[i+1] \leftarrow A[i]$;\\
7 & \ \ \ \ \ \ $i \leftarrow i-1$;\\
\end{tabular}

\begin{tabular}{ll}
8 & \ \ \ \ $A[i+1] \leftarrow key$;\\
9 & $A$\\
\end{tabular}

Statement 9 is needed for the purpose of returning the array $A$ as the value of the procedure.  Iterative statements must be followed by an expression returning a value in the rewriting language.  When the
outermost iterative statement is replaced by a procedure call, then the inner statement will be
followed by such an expression, so it does not need to explicitly return a value.

\section{Order of evaluation}
Because the effect of destructive operations can depend on the order of evaluation, it is
necessary to fully specify the order of evaluation. The order of evaluation specified here seems to correspond to that of typical
imperative languages; right-hand sides of assignment statements are executed in order, and conditional statements are
executed in a typical way.  The purpose is not to define the most efficient execution of term-rewriting but to provide a precise and
simple rewriting-based
semantics for algorithms written in a typical imperative style.

Recall that
$R_P$ is the term rewriting system for a program $P$.  Additional rewrite rules correspond to
the effect of destructive operations; with these added one obtains a system $R_P^+$.  The {\em starting query} is
the term $s$ given at the start, and it is desired to reduce it to normal form using
$R_P^+$.  For this, a procedure $eval(t,T)$ is defined at the meta-level.  Here
$T$ is a term such that $s \Rightarrow_R^* T$ (dropping the subscript from $R$) and
$t$ is a subterm of $T$.  Assuming that it terminates, $eval$ computes a normal form $t'$ of $t$ and a term $T'$
obtained by replacing all occurrences of $t$ in $T$ by $t'$.  At the top level $eval(s,s)$
is called to compute a normal form of $s$.  If $\beta$ is an equivalence label then
$T|(\beta \leftarrow u)$ is term $T$ with all occurrences of subterms having label
$\beta$ replaced by $u$.

To define $eval$, consider the body $E_p$ of
a flat procedure $p$.  Let $\alpha$ be a substitution replacing formal parameters by
actual parameters.  Then let $E_p\alpha/T$ be $E_p\alpha$
with the equivalence labels of terms and subterms $u$ on right-hand sides of assignment statements chosen to be distinct from any other equivalence labels in $T$, except that
equivalence labels of constructor terms need not be distinct from labels of identical terms
in $T$.  Also, subterms that are variables do not need to be assigned an equivalence
label.

Define $eval\_list$ as follows:

\begin{tabular}{l}
$eval\_list((t_1, \cdots, t_n),T) \rightarrow$\\
\ \ \ \ $(t'_1,T_1) \leftarrow eval(t_1,T)$;\\
\ \ \ \ $(t'_2,T_2) \leftarrow eval(t_2,T_1)$;\\
...\\
\ \ \ \ $(t'_n,T_n) \leftarrow eval(t_n,T_{n-1})$;\\
$((t'_1, \cdots, t'_n),T_n)$\\
\end{tabular}

If $p$ is a flat procedure with body $E_p$ then $eval$ is defined as follows:

\begin{tabular}{l}
$eval(\beta:p(t_1, \cdots, t_n),T) \rightarrow$\\
\ \ \ \ $((t'_1, \cdots, t'_n), T_n) \leftarrow eval\_list((t_1, \cdots, t_n),T)$;\\
\ \ \ \ $\alpha = (x_1 \leftarrow t'_1, \cdots, x_n \leftarrow t'_n)$;\\
\ \ \ \ $eval(E_p\alpha/T_n,T_n|(\beta \leftarrow [[E_p\alpha/T_n]]))$;\\
\end{tabular}

Roughly speaking, $[[E_p\alpha/T_n]]$ is the body of $p$ converted to a term by
eliminating assignment statements, and
subterms of this term will be evaluated in an order corresponding to the order of the statements in
$E_p$.  
For assignment statements, $eval$ is defined as follows:

\begin{tabular}{l}
$eval((x \leftarrow t; E),T) \rightarrow$\\
\ \ \ \ $(t',T') \leftarrow eval(t,T)$;\\
\ \ \ \ $eval(E,T')$;\\
\end{tabular}

Roughly speaking, the subterms of $T$ corresponding to $t$ are evaluated, and then
the remaining terms in $E$ are evaluated.  For conditional statements, $eval$ is defined
as follows:

\begin{tabular}{l}
$eval$($\beta$:(if $B$ then $E_1$ else $E_2$),$T$) $\rightarrow$\\
\ \ \ \ $(B',T') \leftarrow eval(B,T)$;\\
\ \ \ \ if $B' =$ true then $eval(E_1,T'|(\beta \leftarrow E_1))$\\
\ \ \ \ else if $B' = $ false then $eval(E_2,T'|(\beta \leftarrow E_2))$\\
\ \ \ \ else error;\\
\end{tabular}

For constructors $eval$ is defined as follows:

\begin{tabular}{l}
$eval$($f(t_1, \cdots, t_n)$,$T$) $\rightarrow$\\
\ \ \ \ if $f$ is a constructor and the $t_i$ are constructor terms then $(f(t_1, \cdots, t_n),T)$ else\\
\ \ \ \ if $f$ is a constructor then\\
\ \ \ \ \ \ $((t'_1, \cdots, t'_n),T_n) \leftarrow eval\_list((t_1,\cdots, t_n),T)$;\\
\ \ \ \ \ \ $(f(t'_1, \cdots, t'_n),T_n)$\\
\end{tabular}

If $p$ is a rewrite procedure then $eval$ is defined as follows:

\begin{tabular}{l}
$eval(\beta: p(t_1, \cdots, t_n),T) \rightarrow$\\
\ \ \ \ Find $t'$ such that $p(t_1, \cdots, t_n) \Rightarrow_{R_p} t'$ by the evaluation
strategy specified for $p$;\\
\ \ \ \ $eval(t',T|(\beta \leftarrow t'))$\\
\end{tabular}

Compiled procedures can be considered as rewrite procedures with infinitely many rewrite rules, as mentioned earlier.  For destructive replacement $eval$ is defined as follows:

\begin{tabular}{l}
$eval(\alpha:d\_replace(i,u,t),T) \rightarrow$\\
\ \ \ \ $((i',u',\beta:t'),T') \leftarrow eval\_list((i,u,t),T)$;\\
\ \ \ \ Suppose $t'$ is $f(s_1, \cdots, s_n)$ and $n \ge i'$;\\
\ \ \ \ Let $w$ be $\beta:f(s_1, \cdots, s_{i'-1}, u', s_{i'+1}, \cdots, s_n)$;\\
\ \ \ \ $(w,T'|(\beta \leftarrow w)|(\alpha \leftarrow w))$\\
\end{tabular}

First the rewrite rule $t' \rightarrow w$ is applied to $T'$; this rule rewrites a constructor
term into a different constructor term and thus is not a logical consequence of $R_P$.  This
rule rewrites all occurrences of $t'$, which can introduce side effects.  However, if there
are no other references to $t'$ except in this call to $d\_replace$, the rewrite semantics
is not violated.  Then the $d\_replace$ call is replaced by $w$ which is its value.

This semantics for destructive operations permits actual parameters to be modified during the execution of a procedure.  For example,
one could pass an array to a procedure that zeroes all its elements.

\section{Conclusion}
A language based on first-order term rewriting has been developed to have something of the style of imperative programming languages and to permit efficient translations into such languages.  This language also permits destructive operations with side effects.  To model efficient imperative languages, an equivalence relation on term occurrences can be used, and a version of rewriting called decorated rewriting is developed for this purpose.  The pure rewriting language is confluent, but the version with destructive operations is not.   The approach given in this paper could be extended to more sophisticated languages such as Coq with a highly developed type system.

We wish to emphasize that this language has been made as simple as possible to meet the stated objective of providing an efficient and direct translation into imperative programming languages.  Therefore extensions such as higher-order unification and AC-unification have not been added.  This simplicity makes it easier for the typical programmer to understand the language and also makes the language easier to translate, interpret, and possibly compile.  It also facilitates proofs of correctness in the language.

We also have tried to make the language as close as possible to typical imperative programming languages.  It is our intention that assignment statements in this formalism
will translate to assignment statements in an imperative language, recursion will translate
to recursion, and iterative statements will translate to iterative statements.  This means that the translated program will have essentially the same structure as the original program.  This gives the programmer a high degree of control over the form of the translated program and also makes it easier to translate the proof of correctness and to maintain and perform complexity analysis on the translated programs.  This also makes the language attractive as a replacement for the pseudo-code often used in typical algorithms texts to specify algorithms, but without a rigorous abstract semantics.

In the current programming environment, the same algorithms are programmed again and again in different programming languages, but the structure of the code is essentially the same.  It would be better to represent the algorithm once in an abstract language such as that presented here and then write translations into various high-level imperative languages.  This also makes it more economical to verify the algorithm once in the abstract setting rather than verifying each implementation of it in a specific language; the algorithm
will typically only be part of a program, and the entire program will still be verified in the
target language.
We also want to emphasize that our concern is with algorithms such as shortest path algorithms, maximum flow algorithms, and others that may be used again and again.  We are not interested in programs that use many interrupts, for example, or even in entire programs, but only in portions of them that can be regarded as algorithms.  The present approach can be used to provide implementations of algorithms in larger programs in which some of the code is provided directly in the target language.

In order to avoid conflicts with procedures in the target language program, there should be a facility for renaming the defined symbols and constructors in the abstract programs.  Also, it would be good to have encapsulation to help with this.  However, a detailed translation example is beyond the scope of this paper.

Another problem with the current approach to programming is that programs generally become obsolete after about 15 years and have to be updated or completely rewritten. It should be possible to write an algorithm once and never have to write it again.  The system presented in this paper is designed to make this possible and is maximally accessible to the typical programmer.

\bibliography{paper2}

\end{document}